\newcommand{\ZZ}{\mathbb Z}
\newcommand{\frS}{\mathfrak S}
\newcommand{\vv}{\mathbf}
\theoremstyle{plain}
\newtheorem{theorem}{Теорема}
\newtheorem{lemma}{Лемма}
\theoremstyle{definition}
\begin{document}

\title{О зависимости сложности и глубины обратимых схем, состоящих из функциональных элементов NOT, CNOT и 2-CNOT,
    от количества дополнительных входов\\General Upper Bounds for Gate Complexity and Depth of Reversible Circuits %
    Consisting of NOT, CNOT and 2-CNOT Gates}

\author{Д.\,В.~Закаблуков\footnote{Аспирант МГТУ им. Н.\,Э.~Баумана, кафедра <<Информационная безопасность>>, г.~Москва, %
    e-mail: \texttt{\href{mailto:dmitriy.zakablukov@gmail.com}{dmitriy.zakablukov@gmail.com}}}\\
    Dmitry V. Zakablukov\footnote{Post-graduate of Information Security Chair, BMSTU, Moscow, %
    e-mail: \texttt{\href{mailto:dmitriy.zakablukov@gmail.com}{dmitriy.zakablukov@gmail.com}}}}

\maketitle

УДК 004.312, 519.7    
   
Исследование выполнено при финансовой поддержке РФФИ в рамках научного проекта № 16-01-00196~A.

\begin{abstract}
В работе рассматривается вопрос сложности и глубины обратимых схем, состоящих из функциональных элементов NOT, CNOT и 2-CNOT,
в условиях ограничения на количество используемых дополнительных входов.
Изучаются функции Шеннонa сложности $L(n, q)$ и глубины $D(n,q)$ обратимой схемы, реализующей отображение $f\colon \ZZ_2^n \to \ZZ_2^n$,
при условии, что количество дополнительных входов $q$ находится в диапазоне $8n < q \lesssim n2^{n-o(n)}$.
Доказываются верхние оценки $L(n,q) \lesssim 2^n + 8n2^n \mathop / (\log_2 (q-4n) - \log_2 n - 2)$ и
$D(n,q) \lesssim 2^{n+1}(2,5 + \log_2 n - \log_2 (\log_2 (q - 4n) - \log_2 n - 2))$ для указанного диапазона значений $q$.

The paper discusses the gate complexity and the depth of reversible circuits consisting of NOT, CNOT and 2-CNOT gates
in the case, when the number of additional inputs is limited.
We study Shannon's gate complexity function $L(n, q)$ and depth function $D(n, q)$ for a reversible circuit implementing a
Boolean transformation $f\colon \ZZ_2^n \to \ZZ_2^n$ with $8n < q \lesssim n2^{n-o(n)}$ additional inputs.
The general upper bounds $L(n,q) \lesssim 2^n + 8n2^n \mathop / (\log_2 (q-4n) - \log_2 n - 2)$ and
$D(n,q) \lesssim 2^{n+1}(2,5 + \log_2 n - \log_2 (\log_2 (q - 4n) - \log_2 n - 2))$ are proved
for this case.
\end{abstract}

\textbf{Ключевые слова}: обратимые схемы, сложность схемы, глубина схемы, вычисления с памятью.

\textbf{Keywords:} reversible logic, gate complexity, circuit depth, computations with memory.

\section*{Введение}
Теория схемной сложности берет свое начало с работы Шеннона~\cite{shannon}, в которой было предложено в качестве меры сложности
булевой функции рассматривать сложность реализующей ее минимальной контактной схемы.

О.\,Б. Лупановым установлена~\cite{lupanov_complexity} асимптотика сложности $L(n) \sim \rho 2^n \mathop / n$
булевой функции от $n$ переменных в произвольном конечном полном базисе элементов с произвольными положительными весами,
где $\rho$ обозначает минимальный приведенный вес элементов базиса.
Также в работе~\cite{lupanov_delay} им были рассмотрены схемы из функциональных элементов с задержками и было доказано,
что в регулярном базисе функциональных элементов любая булева функция может быть реализована схемой,
имеющей задержку $T(n) \sim \tau n$, где $\tau$~--- минимум приведенных задержек всех элементов базиса, при сохранении
асимптотически оптимальной сложности. Однако вопрос зависимости значений функций $L(n)$ и $T(n)$
от количества используемых регистров памяти в данных работах не рассматривался.

Вопрос о вычислениях с ограниченной памятью был рассмотрен Н.\,А. Карповой в работе~\cite{karpova},
где доказано, что в базисе классических функциональных элементов, реализующих все \mbox{$p$-местные} булевы функции,
асимптотическая оценка функции Шеннона сложности схемы с тремя и более регистрами памяти
зависит от значения $p$, но не изменяется при увеличении количества используемых регистров памяти.

В данной работе рассматриваются схемы, состоящие из обратимых функциональных элементов NOT, CNOT и 2-CNOT.
Определение таких функциональных элементов и схем было дано, например, в работах~\cite{feynman,maslov_thesis,shende}.
Функции Шеннона сложности $L(n,q)$ и глубины $D(n,q)$ обратимой схемы, состоящей из элементов NOT, CNOT и 2-CNOT
и реализующей некоторое булево отображение $\ZZ_2^n \to \ZZ_2^n$ с использованием $q$ дополнительных входов,
были определены в работах~\cite{my_dm_complexity,my_vestnik_mgu_depth}.

В работе~\cite{my_dm_complexity} были получены следующие оценки сложности обратимой схемы:
\begin{gather*}
    L(n,0) \asymp n2^n \mathop / \log_2 n \; , \\
    L(n,q_0) \asymp 2^n \text{ \,при\, } q_0 \sim n 2^{n-\lceil n \mathop / \phi(n)\rceil} \; ,
\end{gather*}
где $\phi(n) \leqslant n \mathop / (\log_2 n + \log_2 \psi(n))$
и $\psi(n)$~--- любые сколь угодно медленно растущие функции.

В работе~\cite{my_vestnik_mgu_depth} были получены следующие оценки глубины обратимой схемы:
\begin{gather*}
    D(n,q) \geqslant \frac{2^n(n-2)}{3(n+q)\log_2(n+q)} - \frac{n}{3(n+q)} \; , \\
    D(n,0) \leqslant \frac{n2^{n+5}}{\psi(n)} \left( 1 + \varepsilon(n) \right) \; ,
\end{gather*}
где $\psi(n) = \log_2 n - \log_2 \log_2 n - \log_2 \phi(n)$,
$\phi(n) < n \mathop / \log_2 n$~--- любая сколь угодно медленно растущая функция,
\begin{equation*}
    \varepsilon(n) = \frac{1}{4\phi(n)} +(4 + o(1))\frac{\log_2 n \cdot \log_2 \log_2 n}{n} \; .
\end{equation*}
Для схем с дополнительными входами были получены следующие оценки глубины обратимой схемы:
\begin{gather}
    D(n,q_1) \lesssim 3n \text{ \,при\, } q_1 \sim 2^n \; ,
        \label{formula_linear_depth_bound}\\
    D(n,q_2) \lesssim 2n \text{ \,при\, } q_2 \sim \phi(n)2^n  \notag \; ,
\end{gather}
где $\phi(n) = o(n)$~--- любая сколь угодно медленно растущая функция.

Таким образом, на сегодняшний день известны оценки сложности и глубины обратимой схемы только в двух крайних случаях:
когда в схеме совсем не используются дополнительные входы и когда их количество весьма велико.
В данной работе при помощи алгоритма синтеза, основанного на стандартном методе О.\,Б.~Лупанова, доказывается,
что для любого значения $q$ в диапазоне $8n < q \lesssim n2^{n-o(n)}$ верны соотношения
$L(n,q) \lesssim 2^n + 8n2^n \mathop / (\log_2 (q-4n) - \log_2 n - 2)$ и
$D(n,q) \lesssim 2^{n+1}(2,5 + \log_2 n - \log_2 (\log_2 (q - 4n) - \log_2 n - 2))$.

\section{Основные понятия}

Определение обратимых функциональных элементов, в частности NOT и $k$-CNOT, можно найти в работах~\cite{feynman,maslov_thesis,shende}.
Mы будем пользоваться формальным определением этих элементов из работы~\cite{my_dm_complexity}.

Напомним, что через $N_j^n$ обозначается функциональный элемент NOT (инвертор) с $n$ входами,
задающий преобразование $\ZZ_2^n \to \ZZ_2^n$ вида
$$
    f_j(\langle x_1, \ldots, x_n \rangle) = \langle x_1, \ldots, x_j \oplus 1, \ldots, x_n \rangle  \; .
$$
Через $C_{i_1,\ldots,i_k;j}^n = C_{I;j}^n$, $j \notin I$, обозначается функциональный элемент $k$-CNOT с $n$ входами
(контролируемый инвертор, обобщенный элемент Тоффоли с $k$ контролирующими входами),
задающий преобразование $\ZZ_2^n \to \ZZ_2^n$ вида
$$
    f_{i_1,\ldots,i_k;j}(\langle x_1, \ldots, x_n \rangle) =
        \langle x_1, \ldots, x_j \oplus x_{i_1} \wedge \ldots \wedge x_{i_k}, \ldots, x_n \rangle  \; .
$$
Будем рассматривать только функциональные элементы NOT, CNOT (1-CNOT) и 2-CNOT и обратимые схемы, состоящие из них.

Через $L(\frS)$, $D(\frS)$ и $Q(\frS)$ будем обозначать сложность обратимой схемы $\frS$, ее глубину и количество дополнительных входов
соответственно.
Функции Шеннона сложности $L(n,q)$ и глубины $D(n,q)$ обратимой схемы, состоящей из элементов NOT, CNOT и 2-CNOT
и реализующей некоторое булево отображение $\ZZ_2^n \to \ZZ_2^n$ с использованием $q$ дополнительных входов,
были определены в работах~\cite{my_dm_complexity,my_vestnik_mgu_depth}.

Значимыми входами схемы будем называть все входы, не являющиеся дополнительными, а значимыми выходами~--- те выходы,
значения на которых нужны для дальнейших вычислений.

\section{Зависимость сложности обратимой схемы от количества дополнительных входов}

Рассмотрим обратимую схему $\frS_n$ на рис.~\ref{pic_lemma_complexity_of_all_conjunctions},
реализующую все конъюнкции $n$ переменных вида
$x_1^{a_1} \wedge \ldots \wedge x_n^{a_n}$, $a_i \in \ZZ_2$. Схема имеет $n$ значимых входов и $2^n$ значимых выходов;
в нее входят подсхемы $\frS_{\lceil n \mathop / 2 \rceil}$ и $\frS_{\lfloor n \mathop / 2 \rfloor}$,
реализующие конъюнкции от меньшего числа переменных.
Если все $2^n$ конъюнкций на значимых выходах основной схемы реализовать одновременно, а не по мере необходимости,
то $L(\frS_n) \sim 2^n$ и $Q(\frS_n) \sim 2^n$,
как было доказано в работе~\cite[Лемма~1]{my_dm_complexity}.
С другой стороны, мы можем конструировать конъюнкции по мере необходимости по одной, а не все сразу, используя только лишь значимые выходы
подсхем $\frS_{\lceil n \mathop / 2 \rceil}$ и $\frS_{\lfloor n \mathop / 2 \rfloor}$ и всего один дополнительный вход,
который и будет хранить значение нужной нам конъюнкции. После того, как все необходимые операции с этим значимым выходом будут
осуществлены, мы можем его обнулить, применив те же функциональные элементы, что и для его получения, но в обратном порядке.
Таким образом, в рассматриваемом нами случае для получения каждой конъюнкции потребуется не более двух элементов 2-CNOT,
а для получения $t$ конъюнкций (последовательно, по мере необходимости)~--- не более $2t$ элементов 2-CNOT.
Следовательно, $L(\frS_n) \lesssim O(2^{n \mathop / 2}) + 2t$, $Q(\frS_n) \lesssim O(2^{n \mathop / 2}) + 1$.
Такой же подход можно применить к подсхемам
$\frS_{\lceil n \mathop / 2 \rceil}$ и $\frS_{\lfloor n \mathop / 2 \rfloor}$, а также к подсхемам этих подсхем.

\begin{figure}
    \centering
    \includegraphics{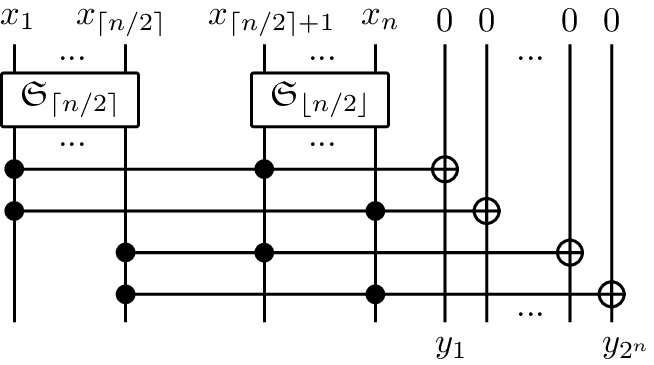}
    \caption
    {
        \small Структура обратимой схемы, реализующей все конъюнкции от $n$ переменных
        с минимальной сложностью (входы схемы сверху).
    }\label{pic_lemma_complexity_of_all_conjunctions}
\end{figure}    

Если вообще не хранить промежуточных значений, а конструировать конъюнкции по мере необходимости, имея лишь входы
$x_1, \ldots, x_n, \bar x_1, \ldots, \bar x_n$, то для получения каждой конъюнкции $x_1^{a_1} \wedge \ldots \wedge x_n^{a_n}$,
очевидно, потребуется не более $2(n-1)$ элементов 2-CNOT, а дополнительных входов потребуется всего $(n-1)$ на все конъюнкции.
К примеру, на рис.~\ref{pic_construct_conjunctions_on_demand} показан пример конструирования 
конъюнкции $\bar x_1 \bar x_2 \bar x_3 x_4 x_5 \bar x_6 x_7 x_8$ с
использованием промежуточных значений $\bar x_1 \bar x_2$, $\bar x_3 x_4$, $x_5 \bar x_6$
и $x_7 x_8$ и последующем обнулением значений на незначимых выходах.

\begin{figure}
    \centering
    \includegraphics{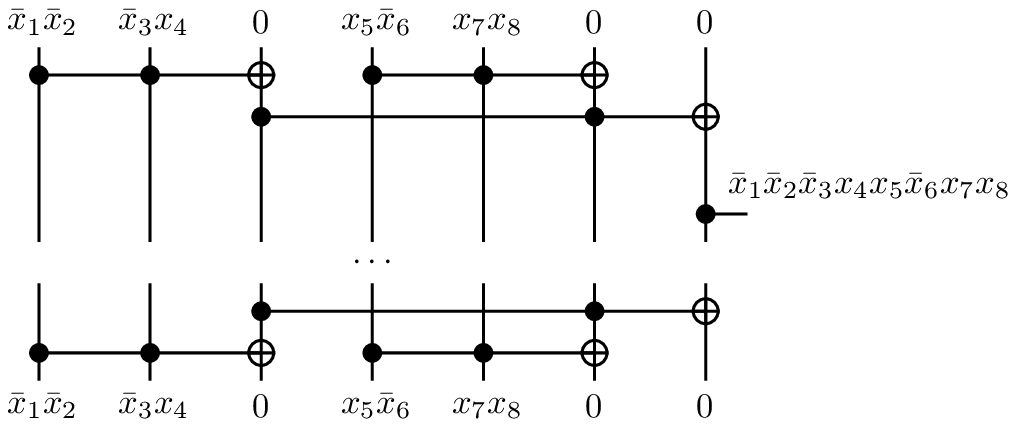}
    \caption
    {
        \small Пример конструирования конъюнкции с использованием промежуточных значений
        и последующим обнулением значений на незначимых выходах (входы схемы сверху).
    }\label{pic_construct_conjunctions_on_demand}
\end{figure}

Рассмотрим в общем случае обратимую схему $\frS_{CONJ(n,q)}$, которая реализует конъюнкции
$n$ переменных вида $x_1^{a_1} \wedge \ldots \wedge x_n^{a_n}$, $a_i \in \ZZ_2$,
при условии, что для хранения промежуточных значений отведено $q$ дополнительных входов,
а значения $\bar x_1, \ldots, \bar x_n$ уже получены ранее.
Обозначим через $L_{CONJ}(n, q, t)$ сложность схемы $\frS_{CONJ(n,q)}$,
реализующей по мере необходимости $t$ конъюнкций, не обязательно различных, причем значение $t$ может быть любым,
в том числе больше $2^n$.
Также обозначим через $Q_{CONJ}(n, q, t)$ общее количество необходимых дополнительных входов для такой обратимой схемы.
Из рассуждений выше можно вывести следующие простые оценки:
\begin{align}
    L_{CONJ}(n, 0, t) & \leqslant 2(n-1)t
        \label{formula_L_CONJ_0}   \; , \\
    Q_{CONJ}(n, 0, t) & = n-1
        \label{formula_Q_CONJ_0}   \; .
\end{align}
Для $q_0 \sim \lceil n \mathop / 2 \rceil + \lfloor n \mathop / 2 \rfloor$ верны соотношения
\begin{gather*}
    L_{CONJ}(n, q_0, t) \leqslant q_0 + 2t  \; , \\
    Q_{CONJ}(n, q_0, t) \leqslant q_0 + 1  \; .
\end{gather*}

Выведем зависимость значения функции $L_{CONJ}(n, q, t)$ от значения $q$.
\begin{lemma}\label{lemma_L_CONJ_bound}
    Для любого значения $q > 2n$, $q \lesssim 2^n$ верны соотношения
    \begin{align}
        L_{CONJ}(n, q, t) & \leqslant q + \frac{8nt}{\log_2 q - \log_2 n - 1}
            \label{formula_L_CONJ_q}   \; , \\
        Q_{CONJ}(n, q, t) & \leqslant q + n - 1
            \label{formula_Q_CONJ_q}   \; .
    \end{align}
\end{lemma}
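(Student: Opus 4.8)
The plan is to interpolate between the two extreme constructions discussed above by grouping the $n$ variables into blocks whose size $k$ is tuned to $q$: I would precompute all conjunctions inside each block once, store them on additional inputs, and then assemble each of the $t$ demanded conjunctions out of these stored block-conjunctions. Concretely, set $D = \log_2 q - \log_2 n - 1$ (positive since $q > 2n$) and choose the integer block length $k = \max\{1, \lceil D/4 \rceil\}$, splitting the inputs into $m = \lceil n/k \rceil$ disjoint blocks of at most $k$ variables. For each block I would realise all of its $\leqslant 2^k$ conjunctions simultaneously by the construction of Lemma~1 from~\cite{my_dm_complexity}, which uses $\sim 2^k$ gates and $\sim 2^k$ additional inputs per block; the negations $\bar x_1, \ldots, \bar x_n$ needed for this are available by hypothesis. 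Over all $m$ blocks this preliminary stage costs at most $\lceil n/k \rceil 2^k$ gates and the same number of additional inputs.

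Every target conjunction $x_1^{a_1} \wedge \ldots \wedge x_n^{a_n}$ factors as the conjunction of the $m$ block-conjunctions selected by its literals, each already sitting on one precomputed line. I would build it on demand by conjoining these $m$ lines through a chain of $m-1$ gates of type 2-CNOT into fresh work inputs, carry out whatever external operations the outer synthesis requires, and then run the chain backwards to reset the work inputs to zero. This costs at most $2(m-1) \leqslant 2n/k$ gates and reuses at most $m-1 \leqslant n-1$ work inputs, and since the conjunctions are produced one at a time these same work inputs serve all $t$ of them; thus the on-demand stage contributes at most $2(m-1)t \leqslant (2n/k)\,t$ gates.

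Adding the two stages gives $Q_{CONJ}(n,q,t) \leqslant \lceil n/k\rceil 2^k + (m-1) \leqslant q + (n-1)$, which is~\eqref{formula_Q_CONJ_q}, and $L_{CONJ}(n,q,t) \leqslant \lceil n/k\rceil 2^k + (2n/k)\,t \leqslant q + (2n/k)\,t$; since $k \geqslant D/4$ by the choice of $k$, the factor $2n/k$ is at most $8n/D = 8n/(\log_2 q - \log_2 n - 1)$, yielding~\eqref{formula_L_CONJ_q}. The circuit constructions here are exactly those already used at the endpoints $q=0$ and $q \sim 2^n$, so the one genuinely quantitative point --- and the step I expect to be the main obstacle --- is verifying that the chosen $k$ simultaneously keeps the block length within the factor $k \geqslant D/4$ (which produces the constant $8$) and keeps the precomputation inside the budget, i.e.\ $\lceil n/k\rceil 2^k \leqslant q$ for every $q > 2n$. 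The latter inequality must be checked by tracking the ceilings in $k = \max\{1,\lceil D/4\rceil\}$ and in $m = \lceil n/k\rceil$, where the fourth-root saving $2^k \approx (q/2n)^{1/4}$ obtained by taking only a quarter of $D$ leaves ample room against $q$; it is precisely this deliberate slack that collapses all the rounding losses into the single clean constant.
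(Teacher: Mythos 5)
Your construction is correct, but it is genuinely different from the paper's. The paper builds $\frS_{CONJ}$ as a recursive balanced binary tree with $K=\lceil\log_2 n\rceil$ levels, observes that the total size of the deepest $(s+1)$ levels is at most $\frac{3n}{2^s}2^{2^{s+1}}$, stores on the $q$ additional inputs exactly as many of those deep levels as fit, and computes the remaining top $r$ levels on demand at a cost of $2(2^r-1)$ gates per conjunction; the constant $8$ then comes from inverting the doubly-exponential growth of the level sizes to bound $2^{r+1}<8n/(\log_2 q-\log_2 n-1)$. You instead use a flat, single-level partition into $m=\lceil n/k\rceil$ blocks of size $k\approx\frac14\log_2\frac{q}{2n}$, precompute all block-conjunctions, and assemble each demanded conjunction by an uncomputed chain of $m-1$ gates, so the constant $8$ comes directly from the choice $k\geqslant D/4$. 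I checked the quantitative point you flagged: for $k\geqslant 2$ one has $q>2n\cdot 16^{k-1}\geqslant 4n\cdot 2^{k}$, which dominates the precomputation cost $\lceil n/k\rceil\cdot O(2^k)$ with large margin (and for $k=1$ the block-conjunctions are just the literals, so nothing is precomputed), so the budget $\leqslant q$ holds; the only loose spot is your citing Lemma~1 of~\cite{my_dm_complexity} as giving ``$\sim 2^k$'' gates and inputs per block, where a concrete bound such as $2\cdot 2^k$ (summing the recursion $2^k+2^{\lceil k/2\rceil}+\cdots$) should be stated, but the slack absorbs it. Your route is more elementary --- no recursion and no sums of doubly-exponential terms --- and in fact uses only about $2n(q/2n)^{1/4}$ of the $q$ allotted inputs. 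What the paper's tree buys in exchange is depth: its on-demand portion has depth $r=O(\log(n/D))$ per conjunction, which is exactly what Lemma~\ref{lemma_D_CONJ_bound} and Theorem~\ref{theorem_D_n_q_bound_for_arbitrary_q} later need, whereas your linear chain has depth $m-1\approx 4n/D$ and would have to be rebalanced into a binary tree of the $m$ block-conjunctions to recover the same depth bound.
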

\begin{proof}
    Соотношение $Q_{CONJ}(n, q, t) \leqslant q + n - 1$ следует из соотношения~\eqref{formula_Q_CONJ_0}.

    \begin{figure}[ht]
        \centering
        \includegraphics[scale=0.95]{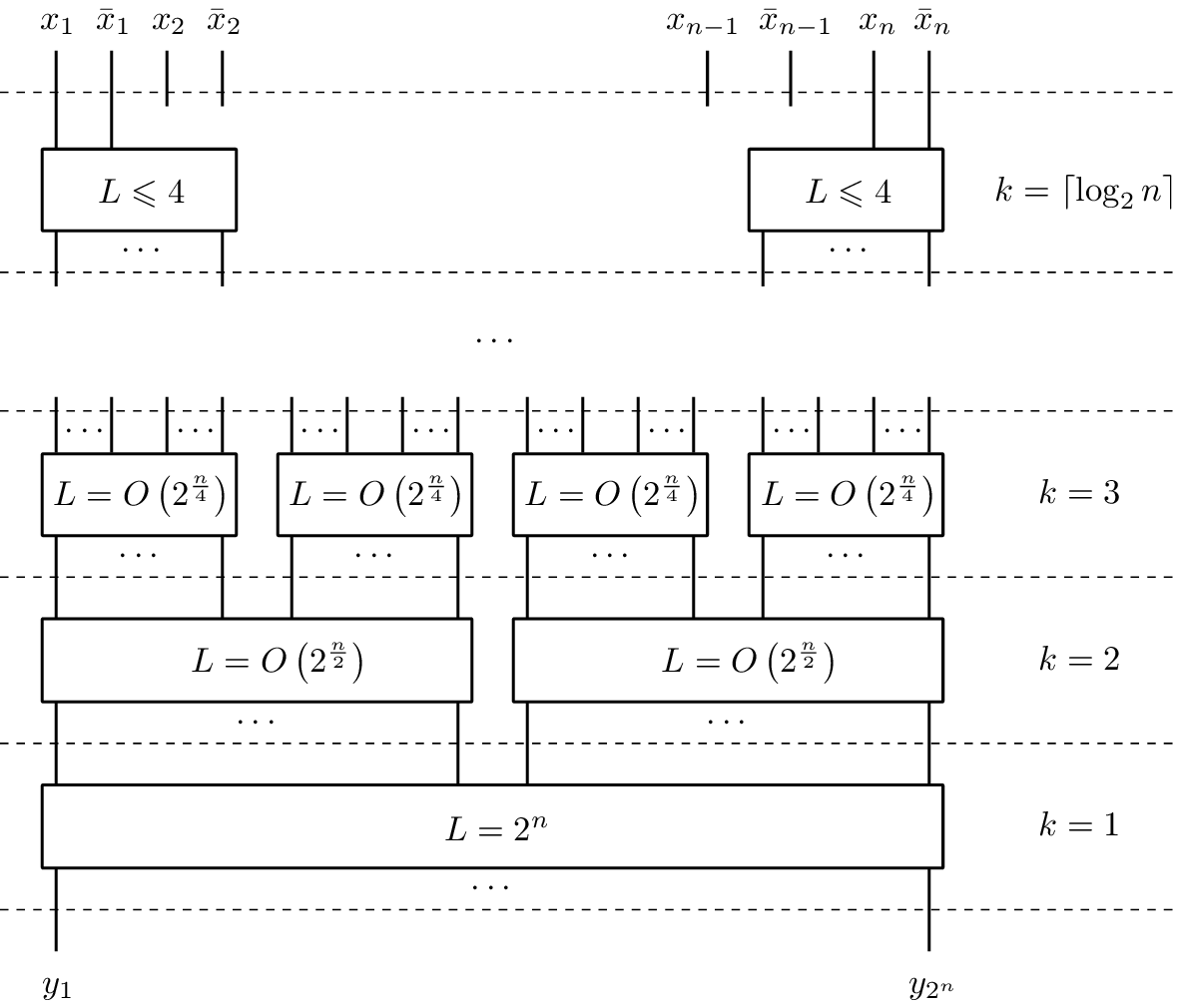}
        \caption
        {
            \small Общая структура обратимой схемы $\frS_{CONJ(n,q)}$ (входы схемы сверху).
        }\label{pic_S_CONJ_structure}
    \end{figure}

    Рассмотрим структуру искомой обратимой схемы $\frS_{CONJ(n,q)}$ на рис.~\ref{pic_S_CONJ_structure}:
    она разбита на $K = \lceil \log_2 n \rceil$ уровней, нумерация ведется снизу вверх.
    На уровне номер $k$ расположены $2^{k-1}$ обратимых подсхем, все они имеют примерно одинаковое количество значимых входов и выходов
    и реализуют все конъюнкции от некоторого подмножества переменных $x_1, \ldots, x_n$, причем подмножества для разных схем
    одного уровня не пересекаются, их объединение равно всему множеству $\{\,x_1, \ldots, x_n\,\}$, а мощности
    данных подмножеств примерно равны.

    Для пояснения структуры схемы $\frS_{CONJ}$ рассмотрим частный ее случай для $n = 7$.
    Схема имеет $K = 3$ уровня, каждый ее уровень расписан в Таблице~\ref{table_S_CONJ_example}.
    Из данной таблицы видно, что если некоторая подсхема $\frS_{k;i}$ на уровне $k$ имеет $2^m$ значимых выходов,
    то на уровне $(k+1)$ есть ровно две подсхемы $\frS_{k+1;j}$ и $\frS_{k+1;j+1}$, подключенные к ней,
    первая из которых имеет $2^{\lfloor m \mathop / 2 \rfloor}$ значимых выходов, а вторая~---
    $2^{\lceil m \mathop / 2 \rceil}$ значимых выходов.
    Структура подсхемы $\frS_{k;i}$ проста: она реализует конъюнкции каждого значимого выхода подсхемы $\frS_{k+1;j}$ с каждым значимым
    выходом подсхемы $\frS_{k+1;j+1}$ (см. рис.~\ref{pic_sub_S_CONJ_structure}). Следовательно, сложность такой подсхемы будет равна
    $2^{\lfloor m \mathop / 2 \rfloor} \cdot 2^{\lceil m \mathop / 2 \rceil} = 2^m$ (используются только элементы 2-CNOT).
    
    \begin{table}[ht]
        \centering
        \begin{tabular}{|m{0.6cm}|m{1.7 cm}|m{1.5 cm}|m{1.7 cm}|m{4 cm}|}
            \hline
            \centering \textbf{Ур.} &
            \centering \textbf{Подсхема} &
            \centering \textbf{Кол-во входов} &
            \centering \textbf{Кол-во выходов} &
            \centering \textbf{Подмножество переменных, для которых реализованы все конъюнкции} \tabularnewline
            \hline

            \centering 1 &
            \centering $\frS_{1;1}$ &
            \centering $2^3 + 2^4$ &
            \centering $2^7$ &
            \centering $\{\,x_1, \ldots, x_7\,\}$ \tabularnewline
            \hline

            \multirow{2}{0.6cm}{\centering 2} &
            \centering $\frS_{2;1}$ &
            \centering $2^1 + 2^2$ &
            \centering $2^3$ &
            \centering $\{\,x_1, x_2, x_3\,\}$ \tabularnewline
            \cline{2-5}

            &
            \centering $\frS_{2;2}$ &
            \centering $2^2 + 2^2$ &
            \centering $2^4$ &
            \centering $\{\,x_4, x_5, x_6, x_7\,\}$ \tabularnewline
            \hline

            \multirow{4}{0.6cm}{\centering 3} &
            \centering $\frS_{3;1}$ &
            \centering 2 &
            \centering $2^1$ &
            \centering $\{\,x_1\,\}$ \tabularnewline
            \cline{2-5}

            &
            \centering $\frS_{3;2}$ &
            \centering 4 &
            \centering $2^2$ &
            \centering $\{\,x_2, x_3\,\}$ \tabularnewline
            \cline{2-5}

            &
            \centering $\frS_{3;3}$ &
            \centering 4 &
            \centering $2^2$ &
            \centering $\{\,x_4, x_5\,\}$ \tabularnewline
            \cline{2-5}

            &
            \centering $\frS_{3;4}$ &
            \centering 4 &
            \centering $2^2$ &
            \centering $\{\,x_6, x_7\,\}$ \tabularnewline
            \hline
        \end{tabular}
        \caption
        {
            Описание структуры обратимой схемы $\frS_{CONJ}$ при $n=7$.
        }\label{table_S_CONJ_example}
    \end{table}
        
    \begin{figure}[ht]
        \centering
        \includegraphics[scale=0.95]{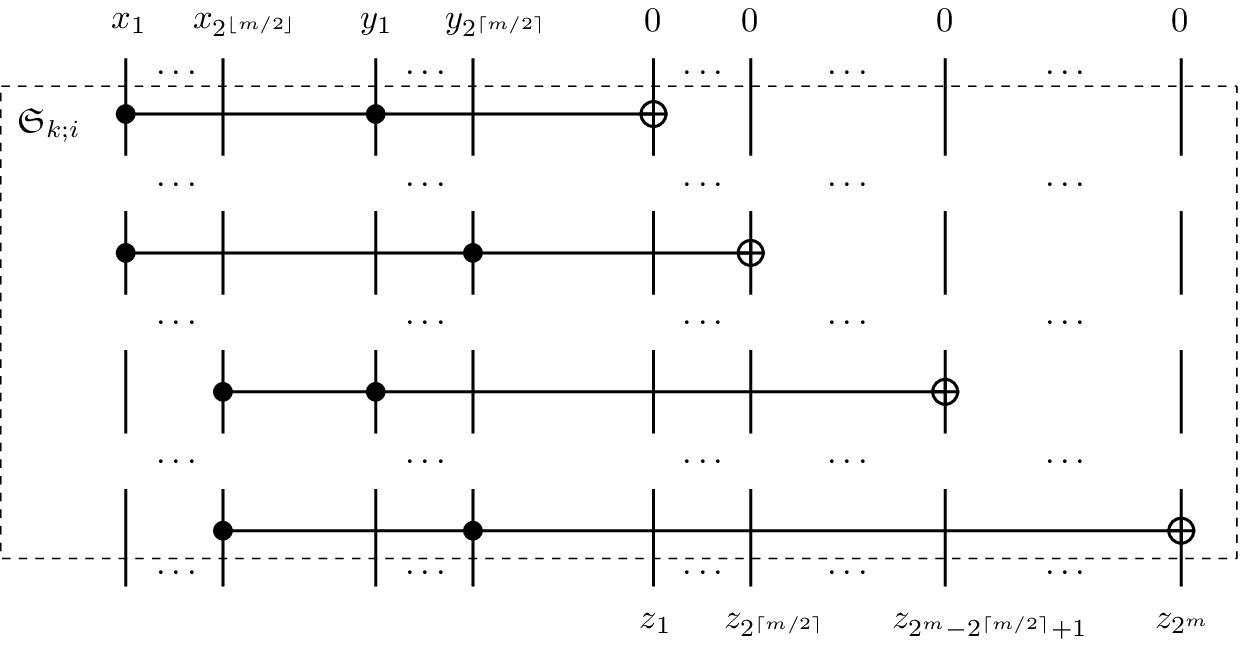}
        \caption
        {
            \small Структура подсхемы $\frS_{k;i}$ обратимой схемы $\frS_{CONJ}$ (входы схемы сверху).
        }\label{pic_sub_S_CONJ_structure}
    \end{figure}

    Вернемся к общей схеме $\frS_{CONJ}$. Нам дано $q$ дополнительных входов для хранения промежуточных значений.
    Разумнее всего потратить их для хранения значений на выходах подсхем самых высоких уровней, поскольку видно, что чем меньше
    уровень схемы $\frS_{CONJ}$, тем больше требуется дополнительных входов для хранения промежуточных значений.
    
    Рассмотрим случай, когда мы имеем возможность хранить все промежуточные значения.
    Обозначим через $L_k$ количество элементов на $k$-м уровне схемы. К примеру, $L_1 = 2^n$,
    $L_2 = 2^{\lfloor n \mathop / 2 \rfloor} + 2^{\lceil n \mathop / 2 \rceil}$.

    Оценим значение $L_k$. Поскольку
    $$
        \left\lceil \frac{n}{2} \right\rceil = \left\lfloor \frac{n + 1}{2} \right\rfloor \leqslant \frac{n + 1}{2}  \; ,
    $$
    то
    $$
        L_2 \leqslant 2 \cdot \max\left(2^{\left\lfloor \frac{n}{2} \right\rfloor}, 2^{\left\lceil \frac{n}{2} \right\rceil} \right) =
            2 \cdot 2^{\left\lceil \frac{n}{2} \right\rceil} \leqslant 2 \cdot 2 ^ {\frac{n}{2} + \frac{1}{2}}  \; .
    $$
    Отсюда следует, что
    \begin{gather*}
        L_3 \leqslant 4 \cdot 2 ^ {\frac{n}{4} + \frac{1}{4} + \frac{1}{2}}  \; ,  \\
        L_4 \leqslant 8 \cdot 2 ^ {\frac{n}{8} + \frac{1}{8} + \frac{1}{4} + \frac{1}{2}}  \; ,  \\
        L_k \leqslant 2^k \cdot 2 ^ {n \mathop / 2^{k-1}}  \; .
    \end{gather*}
    Обозначим $\delta_k = 2^k \cdot 2 ^ {n \mathop / 2^{k-1}}$. Значение переменной $k$ лежит в диапазоне
    $[1, \ldots, K]$, $K = \lceil \log_2 n \rceil$, $k \in \mathbb N$.
    Сделаем переобозначение переменной: $k = K - s$, тогда $s = s(k) = K - k$. Если $k$ обозначает номер уровня схемы
    при нумерации от выходов ко входам (снизу вверх), то $(s+1)$ будет означать номер уровня схемы при нумерации от входов к выходам
    (сверху вниз).
    Значение переменной $s$ лежит в диапазоне $[0, \ldots, K - 1]$, $s \in \mathbb Z_+$.
    В этом случае
    $$
        \delta_k = \frac{2^K}{2^s} \cdot 2^{(2n \cdot 2^s) \mathop / 2^K}
            \leqslant \frac{2n}{2^s} \cdot 2^{2^{s+1}} = \Delta_s  \; .
    $$
    Следовательно, мы получили цепочку неравенств
    $$
        L_k \leqslant \delta_k \leqslant \Delta_{s(k)} = \frac{2n}{2^s} \cdot 2^{2^{s+1}}  \; .
    $$
    
    Выпишем первые члены ряда $\{\,\Delta_{s(k)}\,\}$: $\{\,8n, 16n, 128n, \ldots\,\}$. Видно, что с ростом $s$ значение
    $\Delta_s$ растет все быстрее. Более того, можно утверждать, что для любого $s \geqslant 1$ верно соотношение
    $$
        \sum_{i=0}^{s-1} {\Delta_i} \leqslant \frac{\Delta_s}{2}  \; .
    $$
    Отсюда следует, что
    \begin{gather}
        \sum_{i=0}^s {\Delta_i} \leqslant \frac{3\Delta_s}{2}  \; , \notag \\
        \sum_{i=K}^{K-s} L_i \leqslant \frac{3n}{2^s} \cdot 2^{2^{s+1}}  \; .
            \label{formula_complexity_of_last_layers}
    \end{gather}
    Другими словами, сумма сложностей всех подсхем на последних $(s+1)$ уровнях (при нумерации снизу вверх) не превышает
    $3n \cdot 2^{-s + 2^{s+1}}$.
    
    Вернемся снова к общей схеме $\frS_{CONJ}$.
    Из рис.~\ref{pic_construct_conjunctions_on_demand} видно,
    что для конструирования по мере необходимости одного значимого выхода схемы $\frS_{CONJ}$ на первых $r$ уровнях
    будет использовано не более $(1 + 2 + 4 + \ldots + 2^{r-1})=(2^r - 1)$ элементов 2-CNOT.
    Столько же функциональных элементов потребуется для обнуления
    значений на незначимых выходах. Следовательно, при условии, что количество уровней, для которых подсхемы
    надо конструировать по мере необходимости, не превышает $r$, верно соотношение
    \begin{equation}
        L_{CONJ}(n, q, t) \leqslant q + 2(2^r - 1)\cdot t \leqslant q + t \cdot 2^{r+1}  \; .
            \label{formula_bound_for_L_CONJ_with_r}
    \end{equation}
    Слагаемое $q$ в данном соотношении, очевидно, следует из того факта, что для получения значения на одном выходе
    любой подсхемы $\frS_{k;i}$ требуется ровно один элемент 2-CNOT.
    Если мы можем хранить не более $q$ промежуточных значений на выходах подсхем,
    то для их хранения потребуется не более $q$ элементов 2-CNOT.

    Нам требуется оценить значение $r$. Пусть для данного по условию задачи значения $q$ выполняется неравенство
    \begin{equation}
        \frac{3n}{2^s} \cdot 2^{2^{s+1}} \leqslant q < \frac{3n}{2^{s+1}} \cdot 2^{2^{s+2}}  \;
            \label{formula_bounds_for_q_in_S_CONJ}
    \end{equation}
    для некоторого значения $s \in [0, \ldots, K - 1]$, $s \in \mathbb Z_+$. Тогда мы можем утверждать,
    что данного значения $q$ достаточно для хранения значений на всех выходах подсхем на последних $(s+1)$ уровнях
    при нумерации уровней снизу вверх (см. соотношение~\eqref{formula_complexity_of_last_layers}),
    а количество первых уровней, для которых подсхемы нужно конструировать по мере необходимости,
    не превышает $(k-1)$, поскольку $(s+1) = K - (k-1)$.
    Следовательно, $r \leqslant k-1$.
    Из правого неравенства соотношения~\eqref{formula_bounds_for_q_in_S_CONJ} следует, что
    \begin{gather*}
        \log_2 q < \log_2 3 + \log_2 n - s - 1 + 2^{s+2}  \; , \\
        \frac{2^K}{2^k} = 2^s > \frac{\log_2 q - (\log_2 3 + \log_2 n - s - 1)}{4}  \; , \\
        (\log_2 q - (\log_2 3 + \log_2 n - s - 1))2^k < 4 \cdot 2^K  \; , \\
        K = \lceil \log_2 n \rceil \;,\;\, s \geqslant 0 \; \Rightarrow
            2^k < \frac{8n}{\log_2 q - \log_2 n - 1}  \; \text{ при $q > 2n$ }  \; .
    \end{gather*}
    
    Поскольку $r + 1 \leqslant k$, то
    \begin{equation}
        2^{r+1} < \frac{8n}{\log_2 q - \log_2 n - 1}  \; \text{ при $q > 2n$ }  \; .
        \label{formula_for_2_pow_r_in_lemma_L_CONJ_bound}
    \end{equation}
    Из этого неравенства и неравенства~\eqref{formula_bound_for_L_CONJ_with_r}
    следует оценка из утверждения Леммы
    $$
        L_{CONJ}(n, q, t) \leqslant q + \frac{8nt}{\log_2 q - \log_2 n - 1}  \; .
    $$
    Ограничение $q \lesssim 2^n$ из утверждения Леммы связано с тем,
    что при значениях $q \gtrsim 2^n$ не наблюдается снижение сложности обратимой схемы для рассматриваемого способа синтеза.
\end{proof}

Отметим, что по аналогии со схемой $\frS_{CONJ}$ можно построить схему $\frS_{XOR}$, которая для заданных входов
$x_1, \ldots, x_n$ реализует на своих значимых выходах значения $x_1 \wedge a_1 \oplus \ldots \oplus x_n \wedge a_n$,
$a_i \in \ZZ_2$. Для этого просто надо каждый элемент 2-CNOT в схеме $\frS_{CONJ}$ заменить на два элемента CNOT.
Следовательно,
\begin{gather*}
    L_{XOR}(n, q, t) \leqslant 2q + \frac{16nt}{\log_2 q - \log_2 n - 1}  \; , \\
    Q_{XOR}(n, q, t) \leqslant q + n - 1  \; .
\end{gather*}

Теперь мы можем доказать основную теорему данного раздела.
\begin{theorem}[общая верхняя оценка сложности обратимой схемы с дополнительными входами]\label{theorem_L_n_q_bound_for_arbitrary_q}
    Для любого значения $q > 8n$, $q \lesssim n 2^{n-\lceil n \mathop / \phi(n)\rceil}$,
    где $\phi(n) \leqslant n \mathop / (\log_2 n + \log_2 \psi(n))$
    и $\psi(n)$~--- любые сколь угодно медленно растущие функции, верно соотношение
    $$
        L(n,q) \lesssim 2^n + \frac{8n2^n}{\log_2 (q-4n) - \log_2 n - 2} \; .
    $$
\end{theorem}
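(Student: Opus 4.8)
\section*{Proof proposal}

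The plan is to synthesize an arbitrary transformation $f\colon \ZZ_2^n \to \ZZ_2^n$ by Lupanov's method, separating the work into two interleaved parts: (i) production of the conjunctions required by the Lupanov decomposition, and (ii) assembly of the $n$ output coordinates from those conjunctions by XOR (CNOT) gates. Part~(ii) is essentially the synthesis already analysed in~\cite{my_dm_complexity}, where it was shown that, with the conjunctions available, the transformation is realizable with $\sim 2^n$ gates; this furnishes the leading term $2^n$ and uses only $\Theta(n)$ additional inputs (the input negations $\bar x_1,\ldots,\bar x_n$ and the $n$ output registers). The new ingredient is to replace the bulk production of all conjunctions, which needs $q_0 \sim n2^{n-\lceil n / \phi(n)\rceil}$ inputs, by the on-demand scheme $\frS_{CONJ}$ of Lemma~\ref{lemma_L_CONJ_bound}: each conjunction is produced by a bounded number of $2$-CNOT gates, XORed into the appropriate output, and then uncomputed, so that only $q' \lesssim 2^n$ additional inputs are needed, at the cost of the second term of~\eqref{formula_L_CONJ_q}.

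First I would pin down the number $t$ of conjunctions to be produced. The decisive point is that Lupanov's decomposition reduces the total number of conjunction-uses to $t \lesssim 2^n$, a saving by a factor of $n$ over the naive minterm realization (which would give $t \sim n2^n$ and hence the wrong constant $8n^2$ in the second term). Each of the $\sim 2^n$ assembly gates XORs one produced conjunction value into one output bit, so the assembly contributes the leading term $2^n$; the $\Theta(n)$ negations and the reserved output registers are absorbed into it.

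Next I would apply Lemma~\ref{lemma_L_CONJ_bound} to the conjunction-producing subcircuit with $t \lesssim 2^n$ and with $q' = (q-4n)/2$ additional inputs, the remaining $\approx 4n$ inputs being reserved for $\bar x_1,\ldots,\bar x_n$, for the output registers, and for the internal overhead $n-1$ of~\eqref{formula_Q_CONJ_q}. Since $\log_2 q' - \log_2 n - 1 = \log_2(q-4n) - \log_2 n - 2$, the bound~\eqref{formula_L_CONJ_q} becomes $L_{CONJ}(n,q',t) \leqslant (q-4n)/2 + 8n2^n / (\log_2(q-4n) - \log_2 n - 2)$, and adding the $\sim 2^n$ assembly gates and the $\Theta(n)$ negations yields $L(n,q) \lesssim 2^n + (q-4n)/2 + 8n2^n / (\log_2(q-4n) - \log_2 n - 2)$. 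The hypotheses on $q$ close the argument: $q > 8n$ gives $q' = (q-4n)/2 > 2n$, so Lemma~\ref{lemma_L_CONJ_bound} applies and the denominator $\log_2(q-4n) - \log_2 n - 2$ is positive; while $q \lesssim n2^{n-\lceil n / \phi(n)\rceil}$ with $\phi(n) \leqslant n / (\log_2 n + \log_2 \psi(n))$ forces $q \lesssim 2^n / \psi(n) = o(2^n)$, so that $q' \lesssim 2^n$ (validating the range of the Lemma) and the term $(q-4n)/2 = o(2^n)$ is absorbed into the leading $2^n$.

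The main obstacle is the careful bookkeeping of part~(ii): one must verify that the Lupanov assembly of~\cite{my_dm_complexity} uses at most $2^n(1+o(1))$ CNOT gates and references at most $2^n(1+o(1))$ conjunctions in total, so that the constant $8$ rather than a larger constant survives in the second term; and one must interleave the on-demand production and uncomputation of $\frS_{CONJ}$ with the assembly so that every ancillary input is returned to zero and the circuit computes $f$ with clean output. The allocation of the $q$ additional inputs (the reserved $\approx 4n$ wires and the halving that produces $q' = (q-4n)/2$) is the delicate point on which both the positivity of the denominator and the matching of the constant depend.
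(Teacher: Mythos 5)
Your proposal has the right skeleton (Lupanov decomposition plus on-demand conjunction production via Lemma~\ref{lemma_L_CONJ_bound}) and the final arithmetic happens to match, but there is a genuine gap at the central step. You claim that the Lupanov assembly ``references at most $2^n(1+o(1))$ conjunctions in total'' and then feed a single circuit $\frS_{CONJ(n,q')}$ with $t\lesssim 2^n$. This is not what the Lupanov decomposition gives. After the split~\eqref{formula_f_i_with_braces}, the $\sim pn2^{n-k}\sim 2^n$ assembly gates are 2-CNOT gates whose two controls are a conjunction of the \emph{last} $n-k$ variables and a group XOR-function $g_{j_t}$ of the \emph{first} $k$ variables; neither is a full $n$-variable conjunction, so $\frS_{CONJ(n,q')}$ as defined cannot produce the needed operands. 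The actual construction requires \emph{two} on-demand generators running in parallel --- $\frS_{CONJ(n-k,q_2)}$ for the suffix conjunctions (with $t_2\leqslant pn2^{n-k}$, applied with parameter $n-k$, which is what yields the numerator $8(n-k)t_2\sim 8n2^n$) and the circuits $\frS_{XOR(s,q_3)}$ for the $g_{j_t}$ --- plus a third subcircuit $\frS_{CONJ(k,0)}$ for the prefix conjunctions. Your accounting omits the XOR generators entirely, both their gate cost and, more importantly, their additional inputs.

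This omission is exactly why your allocation $q'=(q-4n)/2$ is posited rather than derived: the factor $1/2$, which produces the ``$-2$'' in the denominator, comes from the fact that the budget $q-4n$ must be split between the two generators ($q_2=q_3$ and $Q=q\leqslant 4n+2q_3$, as in~\eqref{formula_q_3_bound_in_theorem_L_n_q_bound_for_arbitrary_q}), not from ``internal overhead.'' In your framework half of $q-4n$ simply disappears. Likewise, the interleaving you flag as ``the main obstacle'' --- producing $s$ prefix conjunctions, computing a block $\frS_{3;i}$, pairing its outputs with on-demand suffix conjunctions at the assembly gates, then uncomputing everything so the same ancillae can be reused for the next block (which doubles $t_1$ and the $\frS_{3;i}$ cost) --- is the substance of the proof, and you leave it unresolved; without it the claim that $\Theta(n)$ wires suffice outside the generators, and hence the reserved ``$4n$,'' is unjustified. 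To close the argument you must set up the three-generator structure explicitly, choose $k=\lceil n/\phi(n)\rceil$, $s=n-k$, $q_1=0$, $q_2=q_3$, and verify that the $\frS_1$ and $\frS_{XOR}$ contributions are $o\bigl(8n2^n/(\log_2 q_2-\log_2(n-k)-1)\bigr)$.
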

\begin{proof}
    Опишем алгоритм синтеза $\mathbf {A_q}$,
    который является модификацией стандартного алгоритма О.\,Б.~Лупанова и который предназначен для синтезирования обратимых схем
    в условиях ограничения на количество используемых дополнительных входов.
    
    Произвольное булево отображение $f\colon \ZZ_2^n \to \ZZ_2^n$ можно представить в виде некоторых $n$ булевых функций
    $f_i\colon \ZZ_2^n \to \ZZ_2$ от $n$ переменных
    \begin{equation}\label{formula_function_decomposition_by_n_functions}
        f(\vv x) = \langle f_1(\vv x), f_2(\vv x), \ldots, f_n(\vv x) \rangle \; .
    \end{equation}
    Каждую функцию $f_i(\vv x)$ можно разложить по последним $(n-k)$ переменным:
    \begin{equation}\label{formula_function_decomposition_by_last_variables}
        f_i(\vv x) = \bigoplus_{a_{k+1}, \ldots, a_n \in \ZZ_2} {x_{k+1}^{a_{k+1}} \wedge \ldots \wedge x_n^{a_n}}
            \wedge f_i(\langle x_1, \ldots, x_k, a_{k+1}, \ldots, a_n \rangle) \; .
    \end{equation}

    Каждая из $n2^{n-k}$ булевых функций $f_i(\langle x_1, \ldots, x_k, a_{k+1}, \ldots, a_n \rangle)$, $1 \leqslant i \leqslant n$,
    является функцией от $k$ переменных $x_1, \ldots, x_k$, ее можно получить при помощи аналога
    СДНФ, в котором дизъюнкции заменяются на сложение по модулю два:
    \begin{equation}\label{formula_analog_sdnf}
        f_i(\langle x_1, \ldots, x_k, a_{k+1}, \ldots, a_n \rangle) = f_{i,j} = \bigoplus_{
            \substack{\boldsymbol \sigma \in \ZZ_2^k \\f_{i,j}(\boldsymbol \sigma) = 1}}
            x_1^{\sigma_1} \wedge \ldots \wedge x_k^{\sigma_k} \; .        
    \end{equation}
    
    Все $2^k$ конъюнкций вида $x_1^{\sigma_1} \wedge \ldots \wedge x_k^{\sigma_k}$ можно разделить на группы,
    в каждой из которых будет не более $s$ конъюнкций. Обозначим через $p = \lceil 2^k \mathop / s \rceil$ количество таких групп.
    Используя конъюнкции одной группы, мы можем реализовать не более $2^s$ булевых функций по формуле~\eqref{formula_analog_sdnf}.
    Обозначим через $G_i$ множество булевых функций, которые могут быть реализованы при помощи конъюнкций $i$-й группы,
    $1 \leqslant i \leqslant p$. Тогда $|G_i| \leqslant 2^s$.
    Следовательно, мы можем переписать формулу~\eqref{formula_analog_sdnf} следующим образом:
    \begin{equation}\label{formula_analog_sdnf_improved}
        f_i(\langle x_1, \ldots, x_k, a_{k+1}, \ldots, a_n \rangle) = \bigoplus_{
            \substack{t=1 \ldots p\\ g_{j_t} \in G_t\\ 1 \leqslant j_t \leqslant |G_t|}} g_{j_t}(\langle x_1, \ldots, x_k\rangle) \; .
    \end{equation}
    
    Отсюда следует, что
    \begin{equation}\label{formula_f_i_with_braces}
        f_i(\vv x) = \bigoplus_{a_{k+1}, \ldots, a_n \in \ZZ_2}
            \left( \bigoplus_{ \substack{t=1 \ldots p\\ g_{j_t} \in G_t\\ 1 \leqslant j_t \leqslant |G_t|}}
            x_{k+1}^{a_{k+1}} \wedge \ldots \wedge x_n^{a_n} \wedge g_{j_t}(\langle x_1, \ldots, x_k\rangle) \right) \; .        
    \end{equation}

    Общая структура обратимой схемы $\frS_f$, которая реализует отображение $f$
    и которая синтезируется алгоритмом $\mathbf {A_q}$,
    показана на рис.~\ref{pic_scheme_structure_for_case_of_limited_memory}.
    
    \begin{figure}[!ht]
        \includegraphics[scale=0.6]{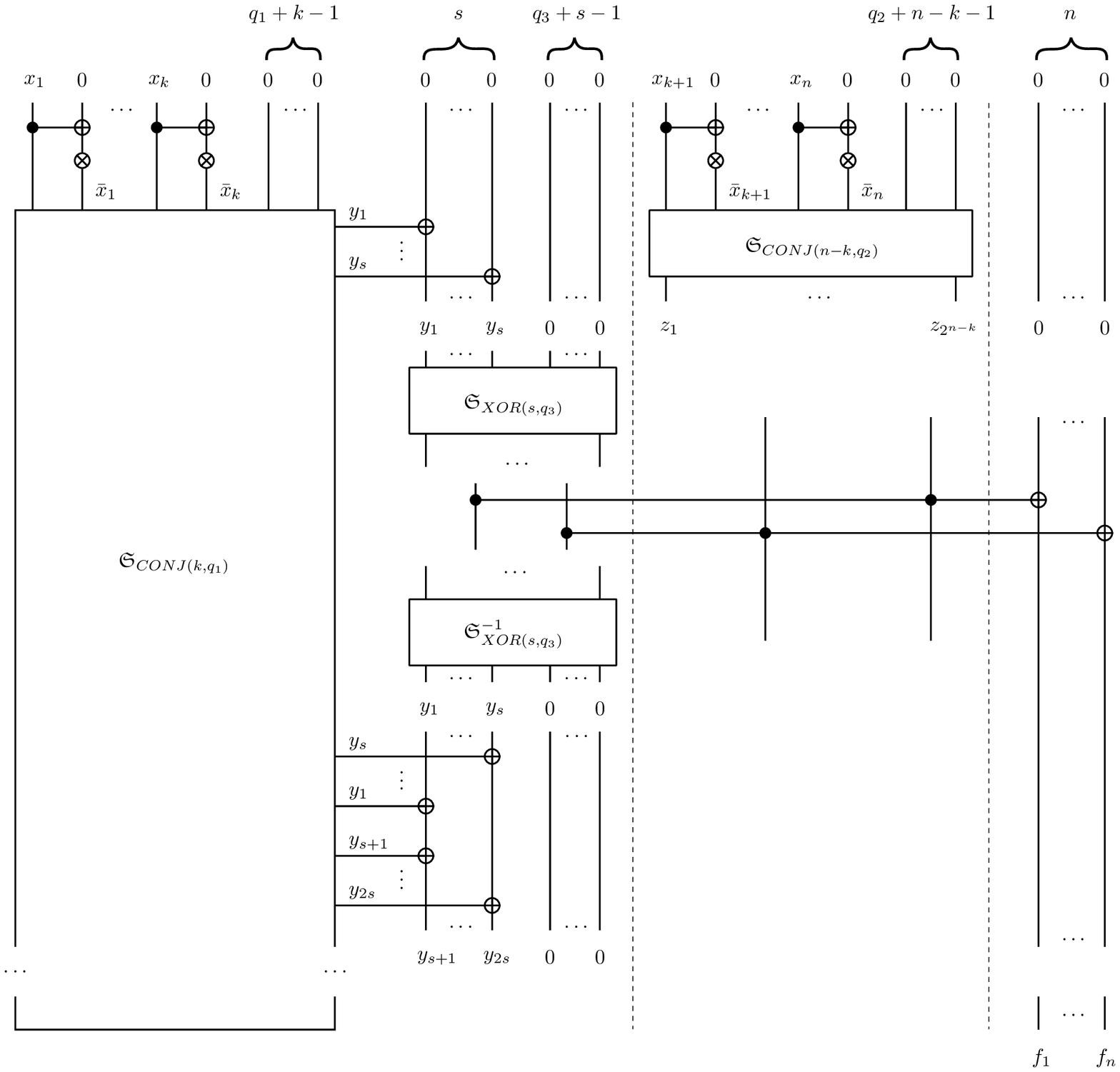}
        \caption
        {
            \small Структура обратимой схемы $\frS_f$, реализующей отображение $f\colon \ZZ_2^n \to \ZZ_2^n$
            в условиях ограничения на количество используемых дополнительных входов (входы схемы сверху).
        }\label{pic_scheme_structure_for_case_of_limited_memory}
    \end{figure}    
    
    Сперва реализуем отрицания для всех входных значений $x_1, \ldots, x_n$ со сложностью $2n$ (по элементу NOT и CNOT на каждый вход),
    задействовав $n$ дополнительных входов.
    
    Разобьем множество значимых входов схемы $x_1, \ldots, x_n$ на две группы: $\{\,x_1, \ldots, x_k\,\}$
    и $\{\,x_{k+1}, \ldots, x_n\,\}$.
    Первую группу входов вместе с их отрицаниями подадим на подсхему $\frS_1 = \frS_{CONJ(k,q_1)}$ для реализации некоторых $t_1$
    конъюнкций $x_1^{a_1} \wedge \ldots \wedge x_k^{a_k}$, $a_i \in \ZZ_2$, отведя данной подсхеме $q_1$ дополнительных входов
    для хранения промежуточных значений.
    Вторую группу входов вместе с их отрицаниями подадим на подсхему $\frS_2 = \frS_{CONJ(n-k,q_2)}$ для реализации некоторых $t_2$
    конъюнкций $x_{k+1}^{a_{k+1}} \wedge \ldots \wedge x_n^{a_n}$, $a_i \in \ZZ_2$, отведя данной подсхеме $q_2$ дополнительных входов
    для хранения промежуточных значений.
    
    Будем реализовывать все $2^k$ различных конъюнкций на значимых выходах подсхемы $\frS_1$ последовательно.
    Полученные значения будем хранить, используя дополнительные входы.
    Как только будут получены очередные $s$ конъюнкций, соответствующие им $s$ значимых выходов
    подаем на значимые входы подсхемы $\frS_{3;i} = \frS_{XOR(s,q_3)}$ для получения значений некоторых $t_3$ функций от переменных
    $x_1, \ldots, x_k$, отведя данной подсхеме $q_3$ дополнительных входов для хранения промежуточных значений.
    Всего будет не более $p = \lceil 2^k \mathop / s \rceil$ различных подсхем $\frS_{3;i}$.
    Как только работа с очередной подсхемой $\frS_{3;i}$ будет закончена, значение на $q_3$ незначимых выходах обнуляем,
    применяя те же самые функциональные элементы, что и для получения подсхемы, но в обратном порядке.
    Затем обнуляем значения на $s$ значимых выходах,
    служивших значимыми входами подсхеме $\frS_{3;i}$, реализуя еще раз полученные ранее $s$ конъюнкций при помощи подсхемы $\frS_1$
    (см. рис.~\ref{pic_scheme_structure_for_case_of_limited_memory}).
    Тем самым мы сможем не увеличивать количество используемых дополнительных входов, а использовать одни и те же дополнительные входы,
    увеличивая однако при этом сложность соответствующих подсхем в два раза.
    
    Из формулы~\eqref{formula_f_i_with_braces} следует,
    что имея значения некоторого значимого выхода подсхемы $\frS_2$ и некоторого значимого выхода подсхемы $\frS_{3;i}$,
    мы можем реализовать одно слагаемое во внутренней скобке, используя ровно один элемент 2-CNOT, контролируемый выход которого
    будет одним из $n$ значимых выходов нашей конструируемой схемы $\frS_f$
    (см. рис.~\ref{pic_scheme_structure_for_case_of_limited_memory}).
    Рассматриваемое нами отображение $f$ имеет $n$ выходов, количество групп конъюнкций от первых $k$ переменных $x_1, \ldots, x_k$
    равно $p$, количество различных конъюнкций от последних $(n-k)$ переменных $x_{k+1}, \ldots, x_n$ равно $2^{n-k}$.
    Следовательно, схемная сложность реализации функции $f_i$ по формуле~\eqref{formula_f_i_with_braces}
    равна $p2^{n-k}$, а отображения $f$ в целом равна $L_4 = pn2^{n-k}$, при этом потребуется ровно $n$
    дополнительных входов для хранения выходных значений отображения $f$.
    
    Таким образом, мы можем вывести неравенство для $L(f,q)$ следующего вида:
    \begin{multline}
        L(f,q) = 2n + L_{CONJ}(k, q_1, t_1) + L_{CONJ}(n-k, q_2, t_2) + \\
            + 2p \cdot L_{XOR}(s, q_3, t_3) + pn2^{n-k}  \; ,
        \label{formula_L_f_q_common}
    \end{multline}
    и для $Q(\frS_f)$ следующего вида:
    \begin{multline}
        Q(\frS_f) = q = n + Q_{CONJ}(k, q_1, t_1) + \\
            + Q_{CONJ}(n-k, q_2, t_2) + Q_{XOR}(s, q_3, t_3) + n  \; .
        \label{formula_Q_f_q_common}
    \end{multline}
    
    Отметим, что каждая из $2^k$ различных конъюнкций на значимых выходах подсхемы $\frS_1$ будет получена ровно два раза,
    следовательно, $t_1 = 2^{k+1}$.
    
    Поскольку каждый значимый выход подсхемы $\frS_2$ используется в качестве входа для $pn$ элементов 2-CNOT,
    а значимый выход подсхемы $\frS_{3;i}$ может использоваться в качестве входа для $2^{n-k}$ элементов 2-CNOT,
    возникает два различных способа конструирования нашей искомой схемы $\frS_f$:
    \begin{enumerate}
        \item \label{item_first_case_for_theorem_L_n_q_bound_for_arbitrary_q}
            В первом случае мы минимизируем значение $t_2$: для каждой группы конъюнкций от первых $k$ переменных $x_1, \ldots, x_k$
            мы один раз конструируем очередной значимый выход подсхемы $\frS_2$, а затем конструируем для него
            $n$ значимых выходов подсхемы $\frS_{3;i}$. Тогда можно утверждать, что $t_2 = p2^{n-k}$, $t_3 \leqslant n2^{n-k}$.
        \item
            Во втором случае мы минимизируем значение $t_3$: для каждой группы конъюнкций от первых $k$ переменных $x_1, \ldots, x_k$
            мы один раз конструируем очередной значимый выход подсхемы $\frS_{3;i}$, а затем конструируем для него нужные значимые выходы
            подсхемы $\frS_2$. Таких выходов может быть один, а может быть и $2^{n-k}$. Однако мы точно можем утверждать,
            что $t_2 \leqslant pn2^{n-k}$, $t_3 \leqslant 2^s$.
    \end{enumerate}
    
    Оценим в общем случае значение $L(f,q)$:
    \begin{multline*}
        L(f,q) \leqslant 2n + pn2^{n-k} + q_1 + q_2 + 4p q_3 +
             \frac{8k2^{k+1}}{\log_2 q_1 - \log_2 k - 1} + \\
                + \frac{8(n-k)t_2}{\log_2 q_2 - \log_2 (n-k) - 1}
                + \frac{32pst_3}{\log_2 q_3 - \log_2 s - 1} \; .
    \end{multline*}
    
    Будем искать такие значения $k$ и $s$, что $p = \lceil 2^k \mathop / s \rceil \sim 2^k \mathop / s$. Тогда
    \begin{multline*}
        L(f,q) \lesssim 2n + \frac{n2^n}{s} + q_1 + q_2 + \frac{4q_3 2^k}{s} +
             \frac{8k2^{k+1}}{\log_2 q_1 - \log_2 k - 1} + \\
                + \frac{8(n-k)t_2}{\log_2 q_2 - \log_2 (n-k) - 1}
                + \frac{32t_3 2^k}{\log_2 q_3 - \log_2 s - 1} \; .
    \end{multline*}
   
    \begin{enumerate}
        \item
            Пусть $t_2 = p2^{n-k} \sim 2^n \mathop / s$, $t_3 \leqslant n2^{n-k}$. В этом случае 
            \begin{multline}
                L(f,q) \lesssim 2n + \frac{n2^n}{s} + q_1 + q_2 + \frac{4q_3 2^k}{s} +
                     \frac{8k2^{k+1}}{\log_2 q_1 - \log_2 k - 1} + \\
                        + \frac{8 \cdot 2^n}{\log_2 q_2 - \log_2 (n-k) - 1}
                        + \frac{32n2^n}{\log_2 q_3 - \log_2 s - 1} \; .
                    \label{formula_L_f_q_first_way}
            \end{multline}
            Положим $s = n - k$, $k = \lceil n \mathop / \phi(n) \rceil$,
            где $\phi(n) \leqslant n \mathop / (\log_2 n + \log_2 \psi(n))$
            и $\psi(n)$~--- любые сколь угодно медленно растущие функции.
            В этом случае будет верно неравенство $2^k \mathop / s \geqslant \psi(n)$.

            Поскольку $q_3 \lesssim 2^s$ и $q_2 < q \lesssim n 2^{n-\lceil n \mathop / \phi(n)\rceil}$,
            то $q_2 = o(2^n)$ и верно соотношение
            \begin{equation}
                2n + \frac{n2^n}{s} + q_2 + \frac{4q_3 2^k}{s} \lesssim 2n + \frac{n2^n}{n - o(n)} + q_2 + \frac{2^{n+2}}{n-o(n)}
                    \lesssim 2^n  \; .
                \label{formula_least_member_bound_first_case_in_theorem_L_n_q_bound_for_arbitrary_q}
            \end{equation}

            Положим $q_1 = 0$, $q_2 = q_3$.    
            Согласно формуле~\eqref{formula_L_CONJ_0}, $L_{CONJ}(n, 0, t) \leqslant 2(n-1)t$,
            следовательно, мы можем заменить в соотношении~\eqref{formula_L_f_q_first_way}
            сложность подсхемы $\frS_1$ на $k2^{k+2}$:
            $$
                L(f,q) \lesssim 2^n + k2^{k+2} + \frac{8 \cdot 2^n}{\log_2 q_2 - \log_2 (n-k) - 1}
                    + \frac{32n2^n}{\log_2 q_3 - \log_2 s - 1} \; .
            $$
            Очевидно, что $k2^{k+2} = 4\lceil n \mathop / \phi(n) \rceil \cdot 2^{\lceil n \mathop / \phi(n) \rceil} = o(2^n)$
            и $8 \cdot 2^n \mathop / (\log_2 q_3 - \log_2 s - 1) = o \left(32n2^n \mathop / (\log_2 q_3 - \log_2 s - 1) \right)$,
            поэтому верно соотношение 
            $$
                L(f,q) \lesssim 2^n + \frac{32n2^n}{\log_2 q_3 - \log_2 s - 1} \; .
            $$

            Согласно Лемме~\ref{lemma_L_CONJ_bound}, $Q_{CONJ}(n, q, t) \leqslant q + n - 1$,
            поэтому соотношение~\eqref{formula_Q_f_q_common} можо переписать в виде
            \begin{equation}
                q \leqslant n + k - 1 + q_2 + n - k -1 + q_3 + s - 1 + n = 4n - k + 2q_3 - 3 < 4n + 2q_3  \; .
                \label{formula_q_3_bound_in_theorem_L_n_q_bound_for_arbitrary_q}
            \end{equation}
            Следовательно, $\log_2 q_3 > \log_2 (q - 4n) - 1$. Отсюда получаем соотношение
            $$
                L(f,q) \lesssim 2^n + \frac{32n2^n}{\log_2 (q - 4n) - \log_2 n - 2} \; ,
            $$
            которое верно при $\log_2 (q - 4n) > \log_2 n + 2 \Rightarrow q > 8n$.
            \smallskip

        \item
            Пусть $t_2 \leqslant pn2^{n-k} \sim n2^n \mathop / s$, $t_3 \leqslant 2^s$. В этом случае 
            \begin{multline}
                L(f,q) \lesssim 2n + \frac{n2^n}{s} + q_1 + q_2 + \frac{4q_3 2^k}{s} +
                     \frac{8k2^{k+1}}{\log_2 q_1 - \log_2 k - 1} + \\
                        + \frac{8n2^n}{\log_2 q_2 - \log_2 (n-k) - 1}
                        + \frac{32 \cdot 2^n}{\log_2 q_3 - \log_2 s - 1} \; .
                    \label{formula_L_f_q_second_way}
            \end{multline}
            
            Как и в первом способе, положим $s = n - k$, $k = \lceil n \mathop / \phi(n) \rceil$,
            где $\phi(n) \leqslant n \mathop / (\log_2 n + \log_2 \psi(n))$
            и $\psi(n)$~--- любые сколь угодно медленно растущие функции;
            $q_1 = 0$, $q_2 = q_3$.
            Тогда из рассуждений, приведенных при описании первого способа, следует соотношение
            $$
                L(f,q) \lesssim 2^n + \frac{8n2^n}{\log_2 q_2 - \log_2 (n-k) - 1}
                        + \frac{32 \cdot 2^n}{\log_2 q_3 - \log_2 s - 1} \; .
            $$

            Очевидно, что
            $32 \cdot 2^n \mathop / (\log_2 q_2 - \log_2 (n-k) - 1) = o \left(8n2^n \mathop / (\log_2 q_2 - \log_2 (n-k) - 1) \right)$,
            поэтому верно соотношение 
            $$
                L(f,q) \lesssim 2^n + \frac{8n2^n}{\log_2 q_2 - \log_2 (n-k) - 1}  \; .
            $$
            
            Поскольку $\log_2 q_3 > \log_2 (q - 4n) - 1$, то и $\log_2 q_2 > \log_2(q - 4n) - 1$.
            Отсюда получаем соотношение
            $$
                L(f,q) \lesssim 2^n + \frac{8n2^n}{\log_2 (q - 4n) - \log_2 n - 2} \; ,
            $$
            которое верно при $\log_2 (q - 4n) > \log_2 n + 2 \Rightarrow q > 8n$.

            Видно, что второй способ синтеза асимптотически лучше первого.
            \smallskip
    \end{enumerate}

    Поскольку мы описали алгоритм синтеза обратимой схемы для произвольного отображения $f$, то
    $$
        L(n,q) \leqslant L(f,q) \lesssim 2^n + \frac{8n2^n}{\log_2 (q-4n) - \log_2 n - 2} \;
    $$
    при $q > 8n$.
    При значениях $q \gtrsim n 2^{n-\lceil n \mathop / \phi(n)\rceil}$
    не наблюдается снижение сложности обратимой схемы для рассматриваемого способа синтеза.
\end{proof}

\section{Зависимость глубины обратимой схемы от количества дополнительных входов}

Из доказательства Теоремы~\ref{theorem_L_n_q_bound_for_arbitrary_q} также можно получить верхнюю оценку для функции $D(n,q)$
в случае $q > 8n$, $q \lesssim n 2^{n-o(n)}$, но для этого необходимо сперва доказать вспомогательную лемму.

\begin{lemma}\label{lemma_D_CONJ_bound}
    Для любого значения $q > 2n$, $q \lesssim 2^n$ верны соотношения
    \begin{align*}
        D_{CONJ}(n, q, t) &\leqslant q + 2t(2+\log_2 n - \log_2 (\log_2 q - \log_2 n - 1)) \; . \\
        D_{CONJ}(n, 0, t) &\leqslant 2t \cdot \lceil \log_2 n \rceil  \; .
    \end{align*}
\end{lemma}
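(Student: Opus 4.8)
The plan is to reuse verbatim the $K$-level structure of $\frS_{CONJ(n,q)}$ built in the proof of Lemma~\ref{lemma_L_CONJ_bound} and to track depth instead of gate count through exactly the same stored/on-demand split. The total depth decomposes additively as the depth of the part that is computed once and then kept in the $q$ storage inputs, plus $t$ times the depth needed to build (and afterwards uncompute) a single conjunction on demand. The central new observation, absent from the complexity argument, is that the gates sitting on one circuit level of a balanced combining tree all act on pairwise disjoint triples of wires --- two distinct intermediate operands and one fresh storage wire each --- so each such level contributes depth $1$, and the depth of a combining subtree equals its number of levels rather than its number of gates.

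For the memoryless bound I would start from the picture of Figure~\ref{pic_construct_conjunctions_on_demand}: a single conjunction of $n$ variables is assembled by a balanced binary tree that pairs the $n$ literals, then pairs the partial products, and so on. By the disjointness remark this tree has depth $\lceil \log_2 n \rceil$, and uncomputing the intermediate values by applying the same gates in reverse order costs another $\lceil \log_2 n \rceil$. Since the $t$ conjunctions are produced one after another on the same $n-1$ reusable storage wires (cf.~\eqref{formula_Q_CONJ_0}), they cannot overlap in time, which gives $D_{CONJ}(n,0,t) \leqslant 2t\lceil \log_2 n \rceil$.

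For the bound with $q > 2n$ storage inputs I would take the same value of $s$, hence of $k = K - s$ and of $r \leqslant k-1$, that was fixed in Lemma~\ref{lemma_L_CONJ_bound} through~\eqref{formula_bounds_for_q_in_S_CONJ}: the outputs of the top $(s+1)$ levels are precomputed once and stored, while only the bottom $r$ levels are rebuilt on demand. The precomputed part uses at most $q$ gates, one per stored value, so its depth is trivially at most $q$; since this block is executed a single time, this is exactly the source of the leading $q$ term. Each on-demand conjunction is then a combining tree of $r$ levels whose leaves are stored values, so by the disjointness remark its depth is $r$, and together with uncomputation the per-conjunction depth is $2r$, yielding $D_{CONJ}(n,q,t) \leqslant q + 2rt$.

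It remains to convert $r$ into the logarithmic expression in the statement, and here I would simply reuse inequality~\eqref{formula_for_2_pow_r_in_lemma_L_CONJ_bound}, namely $2^{r+1} < 8n \mathop/ (\log_2 q - \log_2 n - 1)$. Taking base-$2$ logarithms gives $r < 2 + \log_2 n - \log_2(\log_2 q - \log_2 n - 1)$, and substituting this into $q + 2rt$ produces the claimed bound. The only genuinely delicate point is the disjointness-of-wires claim underlying the parallelization, which must be checked against the exact wiring of the subcircuits $\frS_{k;i}$ in Figure~\ref{pic_sub_S_CONJ_structure}; once that is granted, and once one notes that bounding the precomputed depth by its gate count $q$ is legitimate precisely because that block runs once, the rest is the routine logarithmic manipulation already carried out for the complexity lemma.
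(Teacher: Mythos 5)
Your proposal is correct and follows essentially the same route as the paper: both reuse the level structure and the bound $2^{r+1} < 8n \mathop/ (\log_2 q - \log_2 n - 1)$ from Lemma~\ref{lemma_L_CONJ_bound}, replace the per-conjunction gate count $2(2^r-1)$ by the per-conjunction depth $2r$ (one unit per level, plus uncomputation), bound the precomputed stored part by depth $q$, and take logarithms to get $r < 2 + \log_2 n - \log_2(\log_2 q - \log_2 n - 1)$. The explicit disjoint-wires justification you add for why a level costs depth $1$ is implicit in the paper's appeal to Figure~\ref{pic_construct_conjunctions_on_demand}, but it is the same argument.
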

\begin{proof}
    Рассмотрим схему $\frS_{CONJ(n,q)}$ из Леммы~\ref{lemma_L_CONJ_bound}.
    Согласно формуле~\eqref{formula_bound_for_L_CONJ_with_r}, верно неравенство
    $L_{CONJ}(n, q, t) \leqslant q + 2t \cdot 2^r$. Промежуточные значения, хранимые на $q$ дополнительных входах,
    можно получить с глубиной не более $q$.
    Также очевидно, что сконструировать по мере необходимости один значимый выход схемы $\frS_{CONJ}$ на первых $r$ уровнях
    можно с глубиной $r$, см. рис.~\ref{pic_construct_conjunctions_on_demand}.
    Отсюда следует, что
    $$
        D_{CONJ}(n, q, t) \leqslant q + 2tr  \; .
    $$

    Согласно формуле~\eqref{formula_for_2_pow_r_in_lemma_L_CONJ_bound}, при $q > 2n$ верно неравенство
    $$
        2^r < \frac{4n}{\log_2 q - \log_2 n - 1}  \; ,
    $$
    откуда следует, что
    \begin{gather*}
        r < 2 + \log_2 n - \log_2 (\log_2 q - \log_n - 1)  \; , \\
        D_{CONJ}(n, q, t) \leqslant q + 2t(2+\log_2 n - \log_2 (\log_2 q - \log_2 n - 1)) \; .
    \end{gather*}
    
    Соотношение $D_{CONJ}(n, 0, t) \leqslant 2t \cdot \lceil \log_2 n \rceil$ следует из соотношения~\eqref{formula_L_CONJ_0}
    и того факта, что сконструировать одну конъюнкцию $x_1^{a_1} \wedge \ldots \wedge x_n^{a_n}$ можно с логарифмической глубиной
    $\lceil \log_2 n \rceil$.
\end{proof}

Аналогично, для обратимой схемы $\frS_{XOR(n,q)}$ верно неравенство
\begin{multline*}
    D_{XOR}(n, q, t) = 2D_{CONJ}(n, q, t) \leqslant 2q + 4t(2+\log_2 n - \log_2 (\log_2 q - \log_2 n - 1))
\end{multline*}
для любого значения $q > 2n$, $q \lesssim 2^n$.

Итак, докажем последнюю теорему данной работы.
\begin{theorem}[общая верхняя оценка глубины обратимой схемы с дополнительными входами]\label{theorem_D_n_q_bound_for_arbitrary_q}
    Для любого значения $q > 8n$, $q \lesssim n 2^{n-\lceil n \mathop / \phi(n)\rceil}$,
    где $\phi(n) \leqslant n \mathop / (\log_2 n + \log_2 \psi(n))$
    и $\psi(n)$~--- любые сколь угодно медленно растущие функции, верно соотношение
    $$
        D(n,q) \lesssim 2^{n+1}(2,5 + \log_2 n - \log_2 (\log_2 (q - 4n) - \log_2 n - 2))  \; .
    $$
\end{theorem}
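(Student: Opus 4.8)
The plan is to reuse the synthesis algorithm $\mathbf{A_q}$ and the circuit $\frS_f$ built in the proof of Theorem~\ref{theorem_L_n_q_bound_for_arbitrary_q}, together with the very same parameter choices $s = n-k$, $k = \lceil n \mathop/ \phi(n)\rceil$, $q_1 = 0$, $q_2 = q_3$, and the second (asymptotically better) variant of synthesis, in which $t_2 \leqslant pn2^{n-k} \sim 2^n$ and $t_3 \leqslant 2^s$. Instead of summing gate counts as in~\eqref{formula_L_f_q_common}, I would bound the depth $D(\frS_f)$ by the sum of the depths of its stages: the initial block of negations, the subcircuit $\frS_1 = \frS_{CONJ(k,q_1)}$, the subcircuit $\frS_2 = \frS_{CONJ(n-k,q_2)}$, the blocks $\frS_{3;i} = \frS_{XOR(s,q_3)}$, and the concluding layer of 2-CNOT gates realizing~\eqref{formula_f_i_with_braces}. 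Each $D_{CONJ}$ and $D_{XOR}$ term would then be estimated by Lemma~\ref{lemma_D_CONJ_bound} and the remark following it, exactly paralleling the use of Lemma~\ref{lemma_L_CONJ_bound} in the complexity proof.

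The heart of the argument is to show that the depth is governed entirely by the construction of the $\approx 2^n$ conjunctions on the outputs of $\frS_2$ together with the concluding 2-CNOT layer. Applying Lemma~\ref{lemma_D_CONJ_bound} with $t_2 \sim 2^n$ gives
$$D_{CONJ}(n-k, q_2, t_2) \leqslant q_2 + 2t_2\bigl(2 + \log_2(n-k) - \log_2(\log_2 q_2 - \log_2(n-k) - 1)\bigr),$$
whose leading part is $2^{n+1}\bigl(2 + \log_2 n - \log_2(\log_2 q_2 - \log_2(n-k) - 1)\bigr)$ once one notes that $q_2 = o(2^n)$ and $\log_2(n-k) \leqslant \log_2 n$. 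The remaining half-unit in the constant $2,5$ comes precisely from the concluding layer: each of the $t_2 \sim 2^n$ "as needed" constructions of an $\frS_2$-output is followed, before its uncomputation, by a single depth-one batch of 2-CNOT gates; since these gates share their two control inputs and act on distinct output wires $f_i$, they execute in parallel, so the concluding layer adds at most $t_2 \sim 2^n = 2^{n+1} \mathop/ 2$ to the depth. Summing the two contributions yields $t_2\bigl(5 + 2\log_2 n - 2\log_2(\log_2 q_2 - \log_2(n-k) - 1)\bigr) \sim 2^{n+1}\bigl(2,5 + \log_2 n - \log_2(\log_2 q_2 - \log_2(n-k) - 1)\bigr)$.

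It then remains to check that every other stage is of strictly lower order and may be absorbed into $\lesssim$: the negations have depth $2$; the subcircuit $\frS_1$ has depth $D_{CONJ}(k,0,t_1) \leqslant 2t_1\lceil\log_2 k\rceil \sim 2^{k+2}\log_2 k = o(2^n)$ since $k = o(n)$; and the $p$ blocks $\frS_{3;i}$, carried out sequentially with ancilla reuse, contribute $\lesssim 2p\,D_{XOR}(s,q_3,t_3) \sim (8 \cdot 2^n \mathop/ n)\bigl(\ldots\bigr)$, smaller than the leading term by a factor $\Theta(1 \mathop/ n)$. Finally, I would convert $\log_2 q_2$ into the stated $\log_2(q-4n)$ through the inequality $\log_2 q_2 > \log_2(q-4n) - 1$ obtained via~\eqref{formula_q_3_bound_in_theorem_L_n_q_bound_for_arbitrary_q}, which replaces $\log_2 q_2 - \log_2(n-k) - 1$ by $\log_2(q-4n) - \log_2 n - 2$ and simultaneously fixes the validity range $q > 8n$, $q \lesssim n2^{n-\lceil n \mathop/ \phi(n)\rceil}$.

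The main obstacle I anticipate is the depth bookkeeping itself: unlike the gate count, depth is sensitive to which operations overlap in time, so the delicate points are (i) justifying that the batches of concluding 2-CNOT gates sharing controls run in unit depth, and (ii) confirming that interleaving the $\frS_2$ reconstructions with these batches really costs only one extra unit-depth layer per realized conjunction, so that the final-layer contribution is exactly the $2^n = 2^{n+1} \mathop/ 2$ needed to produce the constant $2,5$ in place of $2$.
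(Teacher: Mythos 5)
Your proposal follows essentially the same route as the paper: the depth of $\frS_f$ is split stage by stage exactly as in~\eqref{formula_L_f_q_common}, each term is bounded via Lemma~\ref{lemma_D_CONJ_bound} with the same choices $s=n-k$, $k=\lceil n \mathop/ \phi(n)\rceil$, $q_1=0$, $q_2=q_3$ and the second variant $t_2\leqslant pn2^{n-k}$, $t_3\leqslant 2^s$, the dominant term $2t_2(2+\log_2 s-\ldots)\sim 2^{n+1}(2+\log_2 n-\ldots)$ and the extra $2^n=0{,}5\cdot 2^{n+1}$ from the concluding layer combine into the constant $2{,}5$, and the passage from $\log_2 q_2$ to $\log_2(q-4n)$ uses~\eqref{formula_q_3_bound_in_theorem_L_n_q_bound_for_arbitrary_q}; this is correct. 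Your only deviation --- the parallelization argument (i) for the concluding 2-CNOT layer --- is both unnecessary and shaky (gates sharing a control wire cannot sit in one depth layer): the paper simply charges that layer its full sequential gate count $pn2^{n-k}\sim n2^n \mathop/ s\sim 2^n$, which already gives the required half-unit, so your worry (ii) dissolves.
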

\begin{proof}
    Рассмотрим обратимую схему $\frS_f$ из доказательства Теоремы~\ref{theorem_L_n_q_bound_for_arbitrary_q},
    синтезированную алгоритмом $\mathbf {A_q}$.
    Из соотношения~\eqref{formula_L_f_q_common} можно вывести аналогичное соотношение для глубины $D(f,q)$ вида
    \begin{multline*}
        D(f,q) = 2 + D_{CONJ}(k, q_1, t_1) + D_{CONJ}(n-k, q_2, t_2) + \\
            + 2p \cdot D_{XOR}(s, q_3, t_3) + pn2^{n-k}  \; .
    \end{multline*}

    Положим $s = n - k$, $k = \lceil n \mathop / \phi(n) \rceil$, где $\phi(n) \leqslant n \mathop / (\log_2 n + \log_2 \psi(n))$
    и $\psi(n)$~--- любые сколь угодно медленно растущие функции.
    В этом случае будут верны соотношения $p \sim 2^k \mathop / s \geqslant \psi(n)$
    и $pn2^{n-k} \sim n2^n \mathop / s \sim 2^n$.
    
    Положим $q_1 = 0$. Поскольку $t_1 = 2^{k+1}$, то
    $$
        D_{CONJ}(k, q_1, t_1) \leqslant 2t_1 \cdot \lceil \log_2 k \rceil
            \leqslant \lceil \log_2 \lceil n \mathop / \phi(n)\rceil \rceil \cdot 2^{\lceil n \mathop / \phi(n) \rceil + 2} = o(2^n)  \; .
    $$
    Таким образом, верно соотношение
    $$
        D(f,q) \lesssim 2^n + D_{CONJ}(n-k, q_2, t_2) + \frac{2^{k+1}}{s} \cdot D_{XOR}(s, q_3, t_3)  \; .
    $$
   
    Рассмотрим те же два случая для $t_2$ и $t_3$, что и на с.~\pageref{item_first_case_for_theorem_L_n_q_bound_for_arbitrary_q}.
    \begin{enumerate}
        \item
            Пусть $t_2 = p2^{n-k} \sim 2^n \mathop / s$, $t_3 \leqslant n2^{n-k}$.
            В этом случае
            \begin{gather*}
                D_{CONJ}(n-k, q_2, t_2) \leqslant q_2 + \frac{2^{n+1}}{s}(2+\log_2 s - \log_2 (\log_2 q_2 - \log_2 s - 1))  \; , \\
                D_{XOR}(s, q_3, t_3) \leqslant 2 q_3 + n2^{n-k+2}(2+\log_2 s - \log_2 (\log_2 q_3 - \log_2 s - 1))  \; .
            \end{gather*}
            
            Положим $q_2 = q_3$. Обозначим $d = 2+\log_2 s - \log_2 (\log_2 q_2 - \log_2 s - 1)$, тогда
            $$
                D(f,q) \lesssim 2^n + q_2 + \frac{d2^{n+1}}{s} +
                    \frac{q_3 2^{k+2}}{s} + \frac{dn2^{n+3}}{s}  \; .
            $$
            
            Согласно формуле~\eqref{formula_least_member_bound_first_case_in_theorem_L_n_q_bound_for_arbitrary_q},
            верно соотношение
            $$
                2^n + q_2 + \frac{4q_3 2^k}{s} \lesssim 2^n  \; .
            $$
            Отсюда получаем, что
            $$
                D(f,q) \lesssim 2^n + \frac{dn2^{n+3}}{s} \lesssim
                    2^n + 2^{n+3}(2+\log_2 n - \log_2 (\log_2 q_3 - \log_2 n - 1))  \; .
            $$

            Из соотношения~\eqref{formula_q_3_bound_in_theorem_L_n_q_bound_for_arbitrary_q} следует, что
            $\log_2 q_3 > \log_2 (q - 4n) - 1$. Таким образом, получаем итоговую оценку сверху вида
            $$
                D(f,q) \lesssim 2^n(17 + 8(\log_2 n - \log_2 (\log_2 (q - 4n) - \log_2 n - 2)))  \; ,
            $$
            которая верна при $\log_2 (q - 4n) > \log_2 n + 2 \Rightarrow q > 8n$.
            \smallskip

        \item
            Пусть $t_2 \leqslant pn2^{n-k} \sim n2^n \mathop / s$, $t_3 \leqslant 2^s$. В этом случае
            \begin{gather*}
                D_{CONJ}(n-k, q_2, t_2) \leqslant q_2 + \frac{n2^{n+1}}{s}(2+\log_2 s - \log_2 (\log_2 q_2 - \log_2 s - 1))  \; , \\
                D_{XOR}(s, q_3, t_3) \leqslant 2 q_3 + 2^{s+2}(2+\log_2 s - \log_2 (\log_2 q_3 - \log_2 s - 1))  \; .                    
            \end{gather*}
            
            Положим $q_2 = q_3$. Обозначим $d = 2+\log_2 s - \log_2 (\log_2 q_2 - \log_2 s - 1)$, тогда
            $$
                D(f,q) \lesssim 2^n + q_2 + \frac{dn2^{n+1}}{s} + \frac{q_3 2^{k+2}}{s} + \frac{d2^{n+3}}{s}  \; .
            $$
            
            Согласно формуле~\eqref{formula_least_member_bound_first_case_in_theorem_L_n_q_bound_for_arbitrary_q},
            верно соотношение
            $$
                2^n + q_2 + \frac{4q_3 2^k}{s} \lesssim 2^n  \; .
            $$
            Отсюда получаем, что
            $$
                D(f,q) \lesssim 2^n + \frac{dn2^{n+1}}{s} \lesssim
                    2^n + 2^{n+1}(2+\log_2 n - \log_2 (\log_2 q_2 - \log_2 n - 1))  \; .
            $$

            Из соотношения~\eqref{formula_q_3_bound_in_theorem_L_n_q_bound_for_arbitrary_q} следует, что
            $\log_2 q_2 > \log_2 (q - 4n) - 1$. Таким образом, получаем итоговую оценку сверху вида
            $$
                D(f,q) \lesssim 2^{n+1}(2,5 + \log_2 n - \log_2 (\log_2 (q - 4n) - \log_2 n - 2))  \; ,
            $$
            которая верна при $\log_2 (q - 4n) > \log_2 n + 2 \Rightarrow q > 8n$.

            Видно, что второй способ синтеза асимптотически лучше первого.
            \smallskip
    \end{enumerate}
    
    Поскольку мы описали алгоритм синтеза обратимой схемы для произвольного отображения $f$, то
    $$
        D(n,q) \leqslant D(f,q) \lesssim 2^{n+1}(2,5 + \log_2 n - \log_2 (\log_2 (q - 4n) - \log_2 n - 2)) \;
    $$
    при $q > 8n$.
\end{proof}

При увеличении количества дополнительных входов с $q \sim n 2^{n-o(n)}$ до $q \sim 2^n$
верхняя асимптотическая оценка функции $D(n,q)$ снижается с экспоненциальной до линейной,
см. соотношение~\eqref{formula_linear_depth_bound}.
Однако выведение зависимости верхней оценки функции $D(n,q)$ от $q$ для диапазона $n 2^{n-o(n)} \lesssim q \lesssim 2^n$
выходит за рамки данной работы.


\section*{Заключение}

В данной работе были рассмотрены обратимые схемы, состоящие из функциональных элементов NOT, CNOT и 2-CNOT
и имеющие различное число дополнительных входов $q$.
Были изучены функции Шеннона сложности $L(n, q)$ и глубины $D(n,q)$ обратимой схемы,
реализующей какое-либо отображение $\ZZ_2^n \to \ZZ_2^n$ в условиях ограничения на значение $q$.
Были доказаны верхние оценки для функций $L(n, q)$ и $D(n, q)$ для диапазона $8n < q \lesssim n2^{n-o(n)}$.
Из полученных соотношений можно сделать вывод, что использование дополнительной памяти в обратимых схемах,
состоящих из элементов NOT, CNOT и 2-CNOT, почти всегда позволяет существенно снизить сложность и глубину таких схем.


\end{document}